\newcommand*{\rom}[1]{\expandafter\@slowromancap\romannumeral #1@}
\newcommand{\hide}[1]{} 
\newtheorem{claim}{Claim}
\newtheorem{lemma}{Lemma}
\newtheorem{theorem}{Theorem}
\newtheorem{definition}{Definition}
\newcommand{\bin}[2]{\text{Bin}\left(#1,#2\right)}
\newcommand{\KL}[2]{D_{\text{KL}}({#1}||{#2} )}
\newcommand{\pr}[1]{\ensuremath{{\bf{Pr}}\left[{#1}\right]}}
\newcommand{\cc}{\text{Correlation Clustering}\xspace}
\newcommand{\cccc}{\text{2-Correlation-Clustering}\xspace}
\newcommand{\f}{\tilde{f}}
\def\e{\epsilon}
\def\hT{\widehat{T}}
 \def\r{\rho}
\newcommand{\diam}{\frac{ \log{n}}{\log{\log{n}}}}
\def\hD{\widehat{D}}
\def\g{\gamma}
\newcommand{\beql}[1]{\begin{equation}\label{#1}}
\newcommand{\beq}[1]{\begin{equation}\label{#1}}
\newcommand{\eeq}{\end{equation}}
\newcommand{\bfrac}[2]{\left(\frac{#1}{#2}\right)}
\newcommand{\brac}[1]{\left(#1\right)}
\def\a{\alpha}
\newcommand{\field}[1]{\mathbb{#1}} 
\newcommand{\Prob}[1]{\ensuremath{{\bf{Pr}}\left[{#1}\right]}}
\newcommand{\Mean}[1]{\ensuremath{{\mathbb E}\left[{#1}\right]}}
\newcommand{\whp}{\textit{whp}\xspace}
\newcommand{\spara}[1]{\smallskip\noindent{\bf #1}}
\def\moverlay{\mathpalette\mov@rlay}
\def\mov@rlay#1#2{\leavevmode\vtop{%
		\baselineskip\z@skip \lineskiplimit-\maxdimen
		\ialign{\hfil$\m@th#1##$\hfil\cr#2\crcr}}}
\newcommand{\charfusion}[3][\mathord]{
	#1{\ifx#1\mathop\vphantom{#2}\fi
		\mathpalette\mov@rlay{#2\cr#3}
	}
	\ifx#1\mathop\expandafter\displaylimits\fi}
\colorlet{DarkRed}{red!50!black}
\colorlet{DarkGreen}{green!50!black}
\colorlet{DarkBlue}{blue!50!black}
\crefname{algocf}{Procedure}{Procedures}
\Crefname{algocf}{Procedure}{Procedures}
\newenvironment{fminipage}%
  {\begin{Sbox}\begin{minipage}}%
  {\end{minipage}\end{Sbox}\fbox{\TheSbox}}
\title{Predicting Signed Edges with $O(n^{1+o(1)} \log{n})$ Queries}
\author{Michael Mitzenmacher\thanks{Harvard University, \href{mailto:michaelm@eecs.harvard.edu}{michaelm@eecs.harvard.edu}}
\and
Charalampos E. Tsourakakis\thanks{Harvard University, \href{mailto:babis@seas.harvard.edu}{babis@seas.harvard.edu} }
}
\date{}
\begin{document}
\pagenumbering{roman}
\maketitle
\begin{abstract} 
Social networks and interactions in social media involve both positive
and negative relationships. Signed graphs capture both types of
relationships: positive edges correspond to pairs of ``friends'', and
negative edges to pairs of ``foes''.  The {\em edge sign prediction
problem}, which aims to predict whether an interaction between a pair of nodes will be positive or negative,
is an important graph mining task for which many heuristics
have recently been
proposed \cite{leskovec2010predicting,leskovec2010signed}.

Motivated by social balance theory, we model the edge sign prediction problem as a noisy
correlation clustering problem with two clusters. We are allowed to
query each pair of nodes whether they belong to the same cluster or not,
but the answer to the query is corrupted with some probability
$0<q<\frac{1}{2}$. Let $c=\frac{1}{2}-q$ be the gap. We provide an algorithm that recovers the clustering with high probability  in the presence of noise for any constant gap $c$ with 
$O(n^{1+\tfrac{1}{\log\log{n}}}\log{n})$ queries. Our algorithm uses simple breadth first search as its main algorithmic primitive. Finally, we provide a novel generalization to $k \geq 3$ clusters and prove that our techniques can recover the clustering if the gap is constant in this generalized setting.

\end{abstract}

\setcounter{page}{0}
\pagenumbering{arabic}

\section{Introduction}
\label{sec:introduction}
With the rise of social media, where both positive and negative interactions take place, signed graphs,
whose study was initiated by Heider, Cartwright, and Harary \cite{cartwright1956structural,heider1946attitudes,harary1953notion},
have become prevalent in graph mining.  A key graph mining problem is the {\em edge sign prediction problem}, which aims to predict whether an interaction between a pair of nodes will be positive or negative \cite{leskovec2010predicting,leskovec2010signed}. Recent works have developed numerous heuristics for this task that perform relatively well in practice \cite{leskovec2010predicting,leskovec2010signed}. 

In this work we propose a theoretical model for the edge sign prediction problem that highlights its intimate connections with the famous planted partition problem \cite{abbe2016exact,condon2001algorithms,hajek2016achieving,mcsherry2001spectral}.     
Specifically, we model the edge sign prediction problem as a noisy correlation clustering problem, where we are able to query a pair of nodes $(u,v)$ to test whether they belong to the same cluster (edge sign $f(u,v)=+1$) or not (edge sign $f(u,v)=-1$).  The query fails to return the correct answer with some probability $0<q<\frac{1}{2}$.  Correlation clustering is a basic data mining primitive with a large number of applications ranging from social network analysis \cite{harary1953notion,leskovec2010predicting} to computational biology \cite{hou2016new}.  Our theoretical model is inspired by the famous {\em balance theory}: ``the friend of my enemy is my friend'' \cite{cartwright1956structural,easley2010networks,heider1946attitudes}.
The details of our model follow. 

\spara{Model \rom{1}.} Let $V=[n]$ be the set of $n$ items that belong to two clusters, call them red and blue. Set $f:V \rightarrow \{\text{red},\text{blue}\}$, $R = \{v \in V(G): f(v) = \text{red} \}$ and $B = \{v \in V(G): f(v) = \text{blue} \}$, where  $0 \leq |R| \leq n$.  The function $f$ is unknown and we wish to recover the two clusters $R,B$  by querying pairs of items. (We need not recover the labels, just the clusters.)  For each query we receive the correct answer with probability $1-q$, where $q>0$ is the corruption probability. That is, for a pair of items $u,v$ such that $f(u)=f(v)$, with probability $q$ it is reported that  $\tilde{f}(u) \neq \tilde{f}(v)$, and similarly if $f(u) \neq f(v)$ with probability $q$ it is reported that $\tilde{f}(u) = \tilde{f}(v)$.  Our goal is to perform as few queries as possible while recovering the underlying cluster structure. 

\spara{Main result.} Our main theoretical result is that we can recover the clusters $(R,B)$ with high probability\footnote{An event $A_n$ holds with high probability ({\it whp}) 
if $\lim\limits_{n \rightarrow +\infty} \Prob{A_n}=1$.} in polynomial time. Our algorithm uses breadth first search (BFS) as its main algorithmic primitive. Our result is stated as Theorem~\ref{thm:thrm1}.  

\begin{theorem}
\label{thm:thrm1} 
There exists a polynomial time algorithm that performs $\Theta(n \log{n} (2c)^{-\diam})$ edge queries and recovers the clustering $(R,B)$ \whp for any gap $0< c = \frac{1}{2}-q<\frac{1}{2}$.  
\end{theorem}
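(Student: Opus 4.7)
The plan is to prove Theorem~\ref{thm:thrm1} by running a noise-amplified BFS from a single root and then propagating cluster labels along the tree. Fix an arbitrary $r\in V$ as the root and, without loss of generality, call $r$ red; the final answer can be renamed after the fact. I would set $D=\diam$ and an amplification parameter $k=\Theta\bigl(\log n\,(2c)^{-D}\bigr)$ tuned so that the product of per-edge budgets hits the target. Then I would grow a rooted tree $T$ level by level with branching factor $b=\log n$, so that $b^{D}\geq n$ and every vertex of $V$ is absorbed into $T$ within $D$ levels. For every new tree edge $(u,v)$, I would submit $k$ independent noisy queries of the pair $(u,v)$ and let the majority sign $\widehat{\sigma}_{uv}\in\{\pm 1\}$ be the edge label; the tentative label of $v$ is then set to $\ell(v)=\ell(u)\cdot\widehat{\sigma}_{uv}$.

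For correctness, I would argue that a tentative label $\ell(v)$ matches the ground truth whenever every edge majority along the $r$-to-$v$ path of $T$ agrees with the true same/different-cluster relation. By a Chernoff bound, each edge majority fails with probability at most $\exp(-\Omega(kc^{2}))$; union-bounding over the $\leq D$ edges on a single path and then over all $n$ vertices gives that the whole labeling is correct except with probability $O\bigl(nD\exp(-\Omega(kc^{2}))\bigr)$. Plugging in $k=\Theta(\log n\,(2c)^{-D})$ drives this to $o(1)$, so the partition induced by $\ell$ agrees with $(R,B)$ up to the inevitable global label symmetry \whp.

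For the query budget, $T$ has $n-1$ edges, each queried $k$ times, giving a total of $\Theta\bigl(n\log n\,(2c)^{-D}\bigr)$ queries; since $(2c)^{-D}=n^{O(1/\log\log n)}=n^{o(1)}$, this is exactly the claimed $n^{1+o(1)}\log n$ budget. Completion of the BFS is immediate because $b^{D}\geq n$ guarantees that the supply of unused vertices cannot be exhausted before depth $D$, and at each level one can deterministically reserve a block of $b$ fresh vertices per parent.

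I expect the cascading-noise analysis to be the main obstacle: one must calibrate $k$ so that the composition of $D$ independently noisy majority votes concentrates correctly for \emph{every} vertex, under a joint union bound over the $\leq D$ edges of a path and over the $n$ vertices, and argue that errors at deeper levels do not compound in an uncontrolled way along the tree. Threading this bound tightly, rather than collapsing it to the much weaker $\log n/c^{2}$ that a blind per-edge Chernoff would suggest, is what produces the exact $(2c)^{-\diam}$ scaling in the query count and is where I expect the bulk of the technical work to lie.
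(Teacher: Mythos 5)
Your plan hinges on submitting $k$ \emph{independent} noisy queries for the same pair $(u,v)$ and taking a majority vote. That is not the model being used here. A pair has a single noisy answer $\tilde f(u,v)$: in the paper's setup the algorithm is equivalent to revealing a random graph $G\sim G(n,p)$ together with one corrupted sign per revealed edge (the paper makes this identification explicitly when it says the queried graph is ``asymptotically equivalent to $G\sim G(n,\cdot)$''). Re-querying the same pair returns the same corrupted value, so your per-edge Chernoff amplification never gets started. If the model \emph{did} allow fresh independent noise on every re-query, the problem would collapse: a single spanning tree with $O(\log n/c^2)$ repeated queries on each of its $n-1$ edges would already recover the clustering with $O(n\log n/c^2)$ total queries, which matches the information-theoretic lower bound the paper cites and makes the $(2c)^{-\diam}$ factor unnecessary. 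Your own calculation shows this --- after writing the per-edge failure probability as $\exp(-\Omega(kc^2))$, the value $k=\Theta(\log n\,(2c)^{-D})$ is a choice imposed to match the theorem statement, not a value forced by your union bound; $k=O(\log n/c^2)$ would already suffice. That is a strong signal the route is not the right one, since it ``proves'' something strictly stronger than the theorem claims.

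The reason the factor $(2c)^{-\diam}$ really appears is the one you cannot reach in this scheme: with only one noisy answer per pair, the algorithm must propagate information over paths of length roughly $\diam$, and Claim~\ref{claim1} shows the signal along a length-$L$ path decays to $\frac{1+(2c)^L}{2}$. To overcome a bias of order $(2c)^L$ one needs roughly $(2c)^{-2L}$ (nearly) independent paths between every pair, and the random graph must be dense enough that such a family of almost-edge-disjoint paths exists between all $\binom n2$ pairs simultaneously. That is precisely Lemma~\ref{lem4} (growing node-disjoint trees with an amplified first-level branching factor and matching their leaves), and the aggregation step is the read-$k$ concentration of Theorem~\ref{thm:readk}, not a per-edge Chernoff bound. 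Your proposal has no analogue of any of this: no path-signal decay, no edge-disjoint path construction, and no device to handle the weak dependence between overlapping paths. Those omissions are the proof, so the gap is central rather than a fixable detail.
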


\noindent  When $c$ is constant, then the number of queries is 
$O(n^{1+\tfrac{1}{\log\log{n}}}\log{n})$.  A natural follow-up question that we address here is whether our results generalize to the case of more than two clusters. We provide a general model and show that our techniques recover the cluster structure \whp as long as the gap $c$ is constant. 

\spara{Model \rom{2}.}  Our model is now that there are $k$ groups, that we number
$\{0,1,...,k-1\}$ and that we think of as being arranged modulo $k$.
Let $g(u)$ refer to the group number associated with a vertex $u$.  We
start by noting that if when querying an edge we returned only whether
the the groups of the two edges were equal, it would be difficult to
reconstruct the clusters; indeed, even with no errors, a chain of such responses
along a path would not generally allow us to determine whether the endpoints of a
path were in the same group or not.  A model that provides
more information and naturally generalizes the two cluster case is the
following:  
when we query an edge $e = (x,y)$, we obtain 

\begin{equation} 
  \label{eq:model2}
    \tilde{f}(e) = \left\{\begin{array}{lr}
        g(x)-g(y) \bmod k , & \text{with probability } 1-q;\\
        g(x)-g(y)+1 \bmod k, & \text{with probability } q/2;\\
        g(x)-g(y)-1 \bmod k, & \text{with probability } q/2.
        \end{array}\right.
\end{equation} 

That is, we obtain the difference between the groups when no error
occurs, and with probability $q$ we obtain an error that adds or subtracts
one to this gap with equal probability.  When $q= 0$, so there are 
no errors from $\tilde{f}(e)$, the edge queries
would allow us to determine the difference between the group numbers of vertices
at the start and end of any path, and in particular would allow us
to determine if the groups were the same.  
We also note that we choose this description
for ease of exposition.  More generally we could handle queries governed by more
general error models, of the form:
$$\tilde{f}(e) = g(x)-g(y)+i \text{~~~with probability~} q_i, 0 \leq i < k.$$
That is, the error does not depend on the group values $x$ and $y$, but is
simply independent and identically distributed over the values $0$ to $k-1$.  

\begin{theorem}
\label{thm:thrm2} 
There exists a polynomial time algorithm that performs $O(n^{1+\tfrac{1}{\log\log{n}}}\log{n})$ edge queries and recovers the $k$ clusters under the model of equation~\eqref{eq:model2}  \whp for any constant gap $0<c<\frac{1}{2}$.
\end{theorem}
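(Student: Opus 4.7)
The plan is to run essentially the same BFS-based algorithm used for Theorem~\ref{thm:thrm1}, with the sole structural change that edge labels now live in the additive group $\mathbb{Z}_k$ and compose along a path by summation modulo $k$ rather than by multiplication of signs in $\{\pm 1\}$.

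Concretely, I would (i) build a BFS tree $T$ of diameter $D = \log n / \log\log n$ rooted at an arbitrary vertex $r$, exactly as in Theorem~\ref{thm:thrm1}; (ii) query each tree edge $e$ independently $M = \Theta(\log n (2c)^{-D})$ times; (iii) aggregate the $M$ observations on each edge into an estimator $\hat{f}(e) \in \mathbb{Z}_k$ of the true label $f(e) = g(u)-g(v) \bmod k$, for instance by taking the mode of the observations; and (iv) for each non-root vertex $u$, output the estimated relative group $\hat{\Delta}(u) = \sum_{e \in \mathrm{path}_T(r,u)} \hat{f}(e) \bmod k$, partitioning the vertices into at most $k$ clusters according to $\hat{\Delta}$.

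The correctness argument reduces to showing that every $\hat{f}(e)$ equals $f(e)$ with high probability. Because $q < 1/2$ we have $1-q > q/2$, so for each edge the true value $f(e)$ is strictly more likely than either neighboring value $f(e) \pm 1 \bmod k$; a standard Chernoff bound then gives $\Prob{\hat{f}(e)\neq f(e)} \leq \exp(-\Omega(Mc^2))$, and a union bound over the $n-1$ tree edges yields exact recovery of every $\hat{f}(e)$ with high probability. On that event, $\hat{\Delta}(u) = g(u)-g(r) \bmod k$, and the induced partition matches the true clustering. The total query count is $(n-1) \cdot M = O(n \log n (2c)^{-D}) = O(n^{1+1/\log\log n}\log n)$ for constant $c$, inheriting the bound from Theorem~\ref{thm:thrm1}.

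The main obstacle is the composition of errors in $\mathbb{Z}_k$. In the $k=2$ case a single mislabeled edge merely flips the final estimate, and amplification is cleanly governed by the bias $2c$. For general $k$, a single mislabeled edge shifts the path sum by an arbitrary element of $\mathbb{Z}_k$, so we genuinely need all $D$ edge estimates to be simultaneously correct. I would sidestep a direct Fourier analysis of the noise convolution on $\mathbb{Z}_k$---whose nontrivial characters have magnitude $|1-q+q\cos(2\pi j/k)|<1$ and which could in principle give sharper bounds---by instead loading enough samples onto each edge to guarantee exact recovery via the union bound. The only facts about $k$ used are that it is constant (to keep the mode estimator's separation $1-q-q/2$ a positive constant) and that $\mathbb{Z}_k$ is abelian (so that path sums are independent of tree orientation), so extending the argument beyond constant $k$ would require the more delicate Fourier treatment.
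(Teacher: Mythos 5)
Your proposal rests on a step the model does not permit: querying the same pair of vertices $M$ times and treating the $M$ answers as independent samples of the noise. In this paper the corruption is applied once and for all: the noisy function $\tilde f$ is a fixed labeling of pairs, and a query of $(u,v)$ simply reveals $\tilde f(u,v)$; issuing the same query again returns the same value and gives no new information. This is why the paper's algorithm never re-queries a pair, why the query budget is phrased as a number of distinct pairs forming a random graph $G(n,p)$, and why the entire technical machinery of Lemma~\ref{lem4} is devoted to constructing $\Theta\bigl((2c)^{-L} n^{4\epsilon/5}\bigr)$ \emph{almost edge-disjoint} paths between every pair $(x,y)$. If independent re-querying were available, one could trivially query a single star centered at one vertex $O(\log n / c^2)$ times per leaf, recovering the clustering with $O(n\log n)$ queries and rendering both Theorem~\ref{thm:thrm1} and the $\Theta\bigl(\log n/(2c)^2\bigr)$ average-degree lower bound of Hajek--Wu--Xu cited in the paper meaningless. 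So your construction does not address the actual problem.

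Because you cannot amplify by re-sampling an edge, a single spanning tree is doomed: with one observation per edge and constant $q$, each path of length $\Theta(L)$ from the root accumulates errors with constant probability, and there is no event ``all $n-1$ tree edge estimates are correct'' to union-bound over. The part of the problem you identify as the ``main obstacle'' --- how errors compose along a path under the $\mathbb{Z}_k$ convolution --- cannot be sidestepped; it is the heart of the proof. The paper's actual argument keeps the Theorem~\ref{thm:thrm1} path infrastructure intact and replaces Claim~\ref{claim1} by an analysis of the lazy random walk on $\mathbb{Z}_k$ with step distribution $(1-q,q/2,q/2)$: it diagonalizes the circulant transition matrix, shows $p_{00}^{t} - p_{01}^{t} \geq 2(1-\cos(2\pi/k))(1-q+q\cos(2\pi/k))^t$, and then uses the read-$k$ concentration bound over the $N$ almost edge-disjoint paths to conclude that the plurality of path sums is $0$ exactly when $g(u)=g(v)$. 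Your instinct that a Fourier/eigenvalue computation on $\mathbb{Z}_k$ is what makes the gap explicit is correct --- but it must be combined with path multiplicity rather than per-edge resampling, and it is not optional.
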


Our proof techniques extend naturally to this model. 

\spara{Roadmap.}  Section~\ref{sec:prelim} presents some theoretical preliminaries. Section~\ref{sec:proposed} presents our algorithmic contributions. Section~\ref{sec:related}  briefly  reviews related work. Finally, Section~\ref{sec:concl} concludes the paper.

\section{Theoretical Preliminaries} 
\label{sec:prelim} 
We use the following powerful probabilistic results for the proofs in Section~\ref{sec:proposed}. 

\hide{ 
\begin{theorem}[Union Bound] 
For a countable set of events $A_1,A_2,A_3,\ldots$ 

$$ \Prob{ \cup_{i} A_i } \leq \sum_{i} A_i.$$ 
\end{theorem}
}

\begin{theorem}[Chernoff bound, Theorem 2.1  \cite{janson2011random}]\label{chernoff}
  Let $X\sim\bin{n}{p}$, $\mu=np$, $a\ge 0$ and $\varphi(x)=(1+x)\ln(1+x)-x$
  (for $x\ge -1$, or $\infty$ otherwise). Then the following inequalities
  hold:
  
    \begin{eqnarray}
    \label{chernoff_ineq_low}
    \pr{X\le \mu - a} &\le& e^{-\mu\varphi\left(\frac{-a}{\mu}\right)}
    \le e^{-\frac{a^2}{2\mu}},\\
    \label{chernoff_ineq_high}
    \pr{X\ge \mu + a} &\le& e^{-\mu\varphi\left(\frac{-a}{\mu}\right)}
    \le e^{ -\frac{a^2}{2(\mu+a/3)}}.
  \end{eqnarray} 
\end{theorem}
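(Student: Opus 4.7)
The plan is to adapt the BFS-based strategy of Theorem~\ref{thm:thrm1} to $\mathbb{Z}_k$-valued labels, with Fourier analysis on $\mathbb{Z}_k$ replacing the elementary two-state calculation that underlies the binary argument. Fix a root $r$ and set $g(r) := 0$; this is without loss of generality, since we only wish to recover the partition induced by $g$. The algorithm I would use is the natural $k$-ary analogue of the Theorem~\ref{thm:thrm1} procedure: grow a BFS-like structure of depth $d = \frac{\log n}{\log\log n}$ rooted at $r$ so that each non-root vertex $v$ receives $m$ edge-disjoint (hence noise-independent) length-$d$ paths from $r$; for each such path, telescope the $d$ noisy queries into a single estimate of $g(v) \bmod k$; and output the plurality of the $m$ estimates.

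The heart of the analysis is to quantify how well a single length-$d$ path determines $g(v)$. Along edges $e_1,\ldots,e_d$, the sum $\sum_i \tilde{f}(e_i)$ telescopes (mod $k$) to $g(r) - g(v) + S_d$, where $S_d = \sum_{i=1}^d \eta_i$ and each $\eta_i \in \{-1,0,+1\}$ is independently $0$ with probability $1-q$ and $\pm 1$ with probability $q/2$. I would compute the distribution of $S_d \bmod k$ via its characters: letting $\zeta = e^{2\pi i/k}$, the one-step Fourier coefficient is $\phi(\omega) = \Mean{\zeta^{\omega \eta}} = (1-q) + q\cos(2\pi\omega/k)$, and $S_d$ has $\Mean{\zeta^{\omega S_d}} = \phi(\omega)^d$. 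Fourier inversion yields
\[
  \Prob{S_d = j} \;=\; \frac{1}{k}\sum_{\omega=0}^{k-1} \phi(\omega)^d\,\zeta^{-j\omega},
\]
and because $|\phi(\omega)| \le \rho := (1-q) + q\cos(2\pi/k) < 1$ for every $\omega \neq 0$, one can read off that $\Prob{S_d = 0} - \max_{j \neq 0}\Prob{S_d = j} = \Omega(\rho^d)$, where the constant $\rho < 1$ depends only on $q$ and $k$ and plays exactly the role that $2c$ plays in the binary argument.

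With a per-path advantage of $\Omega(\rho^d)$ in favor of the correct residue, I would take $m = \Theta(\rho^{-2d}\log n)$ independent paths per vertex: Theorem~\ref{chernoff} applied to the vote count of each of the $k$ candidate residues, combined with a union bound over the $k-1$ wrong residues and then over all $n$ vertices, yields correct recovery of every $g(v)$ \whp. The total query count is $O(n \cdot m \cdot d) = O\bigl(n \log n \cdot \rho^{-2d} \cdot d\bigr)$, and substituting $d = \frac{\log n}{\log\log n}$ gives $\rho^{-2d} = n^{O(1/\log\log n)} = n^{o(1)}$, yielding the claimed $n^{1+o(1)}\log n$. Producing the required $m$ edge-disjoint length-$d$ paths to each vertex is handled exactly as in the proof of Theorem~\ref{thm:thrm1}, exploiting a branching factor of $n^{\Theta(1/d)}$ in the BFS structure.

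The step I expect to be the main obstacle is the Fourier estimate: bounding $|\phi(\omega)|$ uniformly away from $1$ for every nonzero frequency, and tracking the character sum carefully enough to certify that the leading term of $\Prob{S_d = 0} - \Prob{S_d = j}$ is genuinely $\Omega(\rho^d)$ for all $j \neq 0$---including degenerate indices like $j = k/2$ when $k$ is even, where the contributions of $\phi(1)$ and $\phi(k-1)$ could in principle interfere. Once this uniform gap estimate is in place, the plurality analysis, Chernoff bound, union bound, and query count are essentially identical to the binary case and present no new difficulties.
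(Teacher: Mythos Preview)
Your proposal does not address the stated theorem at all. The statement you were asked to prove is the classical Chernoff bound for a binomial random variable $X \sim \bin{n}{p}$, namely the two tail inequalities \eqref{chernoff_ineq_low} and \eqref{chernoff_ineq_high}. In the paper this result is not proved; it is quoted verbatim as a preliminary from \cite{janson2011random}. A correct proof proceeds via the standard exponential-moment (Cram\'er--Chernoff) method: for $\lambda > 0$, Markov's inequality gives $\pr{X \ge \mu + a} \le e^{-\lambda(\mu+a)}\,\Mean{e^{\lambda X}}$, one computes $\Mean{e^{\lambda X}} = (1-p+pe^{\lambda})^n \le \exp(\mu(e^{\lambda}-1))$, optimizes over $\lambda$ to obtain the bound $e^{-\mu\varphi(a/\mu)}$, and then uses the elementary calculus facts $\varphi(x) \ge \tfrac{x^2}{2(1+x/3)}$ for $x \ge 0$ and $\varphi(-x) \ge \tfrac{x^2}{2}$ for $0 \le x \le 1$ to get the simpler right-hand forms. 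Nothing in your write-up touches any of this.

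What you have actually sketched is an argument for Theorem~\ref{thm:thrm2}, the $k$-cluster recovery result. If that was your intent, your approach is essentially the paper's own, just in Fourier language: your character values $\phi(\omega) = (1-q) + q\cos(2\pi\omega/k)$ are precisely the eigenvalues of the circulant Markov-chain matrix the paper diagonalizes, and your per-path advantage $\Omega(\rho^d)$ is the paper's gap $p_{00}^t - p_{01}^t$. The one structural difference is that you assume fully edge-disjoint paths from a single root and invoke Chernoff directly, whereas the paper builds \emph{almost} edge-disjoint paths between all pairs and must therefore appeal to the read-$k$ concentration of Theorem~\ref{thm:readk}. But none of this constitutes a proof of the Chernoff bound itself, which is the statement under review.
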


\noindent We define the notion of read-$k$ families, a useful concept when proving concentration results for weakly dependent variables. 
 
\vspace{0.4cm}
\begin{definition}[Read-$k$ families]
Let $X_1, \dots,   X_m$ be independent random variables. For $j \in [r]$, let $P_j \subseteq [m]$ and let $f_j$ be a Boolean function of $\{ X_i \}_{i\in P_j}$. Assume that 
$ |\{ j | i \in P_j \}| \leq k$  for every $i\in [m]$. Then, the random variables $Y_j=f_j(\{ X_i \}_{i\in P_j})$ are called a read-$k$ family.
\end{definition} 

\vspace{0.4cm}
\begin{theorem}[Concentration of Read-$k$ families \cite{gavinsky2014tail}] 
\label{thm:readk}
Let $Y_1,\ldots,Y_r$ be a family of read-$k$ indicator variables with $\Prob{Y_i=1}=q$. Then for any $\epsilon>0$,
\beql{readk-upper-bound1}
\Prob{ \sum_{i=1}^r Y_i \geq (q+\epsilon) r } \leq e^{-\KL{q+\epsilon}{q} \cdot r/k} 
\eeq 
and
\beql{readk-lower-bound}
\Prob{ \sum_{i=1}^r Y_i\leq (q-\epsilon) r} \le e^{-\KL{q-\epsilon}{q} \cdot r/k}.
\eeq
\end{theorem}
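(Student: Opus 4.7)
The plan is to prove the upper-tail bound \eqref{readk-upper-bound1}; the lower-tail bound \eqref{readk-lower-bound} follows by applying the same argument to $1-Y_j$, which is a read-$k$ family of indicators with success probability $1-q$. The overall strategy is the standard moment generating function / Chernoff recipe, modified to accommodate the read-$k$ dependence via a Shearer-type moment inequality.

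Let $S = \sum_{j=1}^r Y_j$ and fix $\lambda > 0$. By Markov's inequality applied to $e^{\lambda S}$,
\[
\pr{S \geq (q+\epsilon) r} \leq e^{-\lambda(q+\epsilon)r}\cdot \Mean{\prod_{j=1}^r e^{\lambda Y_j}}.
\]
Writing $g_j(X_{P_j}) := e^{\lambda Y_j}$ expresses the product as a product of non-negative functions of the independent $X_i$'s, with each $X_i$ feeding into at most $k$ factors. A moment analogue of Shearer's entropy inequality then gives
\[
\Mean{\prod_{j=1}^r g_j(X_{P_j})} \;\leq\; \prod_{j=1}^r \Mean{g_j(X_{P_j})^k}^{1/k} \;=\; \prod_{j=1}^r \Mean{e^{k\lambda Y_j}}^{1/k}.
\]
Each factor is the Bernoulli moment generating function at $k\lambda$, equal to $1 + q(e^{k\lambda}-1)$. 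Substituting, taking logarithms, and optimizing over $\lambda$ reproduces the classical Chernoff/Sanov derivation for independent Bernoullis and collapses the exponent to $-\KL{q+\epsilon}{q}\cdot r/k$, which is precisely the claimed bound. The optimizing choice is the same $\lambda$ that one uses in the independent case, after rescaling by $k$ to absorb the $e^{k\lambda}$ in the factorization above.

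The main obstacle is the Shearer-type moment inequality; this is the only place where the read-$k$ assumption is actually used. One clean route is induction on $r$: select any index $j^{\star}$, peel off the factor $g_{j^{\star}}$ via Hölder's inequality with exponents $k$ and $k/(k-1)$, integrate out the variables unique to $P_{j^{\star}}$, and reduce to a smaller instance whose maximum degree remains at most $k$. An alternative is to invoke a fractional edge coloring of the hypergraph with hyperedges $\{P_j\}_{j=1}^r$: since its maximum degree is at most $k$, its fractional chromatic index is at most $k$ by LP duality, and the associated fractional application of Hölder yields the same product form directly. Either way, once this inequality is in hand the remainder of the argument is purely mechanical and mirrors the independent Chernoff calculation, with the sole change being the extra factor $1/k$ in the exponent that reflects how far the family is from being fully independent.
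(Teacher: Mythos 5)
The paper itself does \emph{not} prove this theorem: it states it as an imported black-box result and cites \cite{gavinsky2014tail}, so there is no ``paper proof'' to compare against. Your reconstruction of that cited argument is sound in its skeleton and does follow essentially the same route as Gavinsky--Lovett--Saks--Srinivasan: reduce to an MGF bound via Markov, invoke a Shearer/Finner-type moment inequality to pay a factor-$k$ penalty in the exponent, and then run the standard Bernoulli Chernoff--Sanov optimization, which cleanly yields the $\KL{q+\epsilon}{q}\cdot r/k$ exponent; the lower tail via $1\mapsto 1-Y_j$ and the identity $\KL{1-q+\epsilon}{1-q}=\KL{q-\epsilon}{q}$ is also correct.

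The genuine gap is in how you justify the key moment inequality
$\Mean{\prod_j g_j}\leq\prod_j\Mean{g_j^k}^{1/k}$. Both routes you sketch are problematic. The ``induction on $r$, peel one factor off by H\"older with exponents $(k,k/(k-1))$'' route does not reduce to a smaller instance of the \emph{same} inequality: after one peel you are left bounding $\Mean{\prod_{j\neq j^\star} g_j^{k/(k-1)}}^{(k-1)/k}$ by $\prod_{j\neq j^\star}\Mean{g_j^k}^{1/k}$, which after the substitution $h_j=g_j^{k/(k-1)}$ is a read-$(k-1)$-\emph{shaped} inequality for a collection that is still only read-$k$; the exponent drift has to be tracked carefully, and your sketch does not address it. The fractional-edge-coloring route rests on a false claim: a hypergraph with maximum vertex degree $k$ does \emph{not} in general have fractional chromatic index at most $k$; already for graphs this fails (the $5$-cycle has $\Delta=2$ but fractional chromatic index $5/2$). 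A correct and genuinely clean route is induction on $m$, the number of underlying independent variables: condition on $X_m$, apply the inductive hypothesis to the restricted functions of $X_{[m-1]}$, and then use generalized H\"older over $X_m$ on the at most $k$ functions $j$ with $m\in P_j$ (the exponent budget $\sum_{j:\,m\in P_j}1/k\le1$ is exactly what read-$k$ buys you); this gives the inequality outright, after which, as you say, the rest of the calculation is mechanical.
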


\noindent Here, $D_{\text{KL}}$ is Kullback-Leibler divergence defined as 

$$ \KL{q}{p} = q \log{ \left( \frac{q}{p} \right) }+ (1-q) \log{ \left( \frac{1-q}{1-p} \right) }.$$

We will use the following corollary of Theorem~\ref{thm:readk}, which provides Chernoff-type bounds for read-$k$ families. This is derived in a similar way that Chernoff multiplicative bounds are derived from Equations~\eqref{chernoff_ineq_high} and ~\eqref{chernoff_ineq_low}, see \cite{mcdiarmid1998concentration}. Notice that the main difference compared to the standard Chernoff bounds 
is the extra  $k$ factor in denominator of  the exponent. 
 
\begin{theorem}[Concentration of Read-$k$ families \cite{gavinsky2014tail}] 
\label{thm:readk}
Let $Y_1,\ldots,Y_r$ be a family of read-$k$ indicator variables with $\Prob{Y_i=1}=q$. Also, let $Y=\sum_{i=1}^r Y_i$. Then for any $\epsilon>0$,
\beql{readk-upper-bound}
\Prob{  Y \geq  (1+\epsilon) \Mean{Y} } \leq e^{-\frac{\epsilon^2 \Mean{Y}}{2k(1+\epsilon /3)} }
\eeq  

\beql{readk-lower-bound}
\Prob{  Y \leq  (1- \epsilon) \Mean{Y} } \leq e^{-\frac{\epsilon^2 \Mean{Y}}{2k} }.
\eeq  
\end{theorem}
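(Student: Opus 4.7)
The plan is to derive these multiplicative Chernoff-type inequalities as a direct corollary of the KL-divergence form of the read-$k$ concentration inequality stated just above, mimicking the textbook derivation of the multiplicative Chernoff bounds for binomials from their Cram\'er--Chernoff form. Since the KL bound already contains the correct ``$r/k$'' factor in the exponent, no new probabilistic work is needed; the argument is purely a conversion of an additive deviation into a multiplicative one via a standard bound on the Bernoulli KL divergence.

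For the upper tail, set $\mu = \Mean{Y} = qr$ and rewrite the event $\{Y \geq (1+\epsilon)\mu\}$ as $\{\sum_i Y_i \geq (q + \epsilon q)r\}$. Applying the KL-divergence upper bound from the preceding theorem with additive deviation $\epsilon q$ gives
$$\Prob{Y \geq (1+\epsilon)\mu} \leq \exp\!\left(-\frac{r}{k}\, \KL{q(1+\epsilon)}{q}\right).$$
The key technical ingredient is the well-known inequality $\KL{q+x}{q} \geq x^2/(2(q+x/3))$ for $x \geq 0$, which is obtained by a short Taylor/convexity calculation on the Bernoulli KL divergence. It is precisely the bound that converts the exponent $\mu\,\varphi(a/\mu)$ appearing in \Cref{chernoff} into the $a^2/(2(\mu + a/3))$ form of inequality~\eqref{chernoff_ineq_high}; see \cite{mcdiarmid1998concentration}. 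Specializing to $x = \epsilon q$ yields $\KL{q(1+\epsilon)}{q} \geq \epsilon^2 q / (2(1 + \epsilon/3))$, and substituting (together with $qr = \mu$) produces the claimed upper-tail bound.

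The lower-tail bound is handled by the same reduction, with the companion inequality $\KL{q-x}{q} \geq x^2/(2q)$ for $0 \leq x \leq q$ in place of the previous one and with the additive deviation again taken to be $\epsilon q$. Both proofs are routine; the only real obstacle is citing the two KL inequalities $\KL{q \pm x}{q} \geq x^2/(2(q \pm x/3))$, but because these are exactly the same inequalities used to pass from the Cram\'er--Chernoff form to the standard multiplicative Chernoff bounds, a reference to \cite{mcdiarmid1998concentration} suffices and no further manipulation beyond plugging in $x = \epsilon q$ is needed.
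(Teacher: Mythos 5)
Your proposal is correct and is exactly the derivation the paper gestures at: the paper does not prove this restatement but simply notes it follows from the KL-divergence form of the read-$k$ theorem ``in a similar way that Chernoff multiplicative bounds are derived,'' citing \cite{mcdiarmid1998concentration}. You have made that routine conversion explicit (additive deviation $\epsilon q$, then the standard bounds $\KL{q+x}{q}\ge x^2/(2(q+x/3))$ and $\KL{q-x}{q}\ge x^2/(2q)$), and the resulting exponents match the statement.
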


\section{Proposed Method}
\label{sec:proposed}
We prove our main result through a sequence of claims and lemmas. For completeness we include all proofs even if some claims are classic, e.g., Claim~\ref{claim1}. At a high level, our proof strategy is as follows: 

\begin{enumerate} 
\item We compute the probability that a simple path between $u$ and $v$ provides us with the correct information on whether $f(u)=f(v)$ or not.  
\medskip
\item	 Let $L = \diam$. We show that there exist $N=(2c)^{-L} e^{\tfrac{4L}{5}}$ {\em almost edge-disjoint paths} of length $(1+o(1))L$ between any pair of vertices with probability at least $1-\frac{1}{n^3}$.    The reader can think of the paths as being edge-disjoint, if that is helpful; we shall clarify both what we mean by 
{\em almost edge-disjoint paths} and how it affects the proof later in the paper. 
\medskip
\item  For each path from the collection of $N$  almost edge-disjoint paths, we compute the product of the sign of the edges along the path. Since the paths are not entirely edge disjoint, the corresponding random variables are weakly dependent. We use concentration of multivariate polynomials \cite{gavinsky2014tail}, see also \cite{alonspencer,kim-vu}, in combination with Claim~\ref{claim1} to show that using the majority of the $N$ resulting signs to decide whether $f(u)=f(v)$ or not for a pair of nodes $u,v \in V(G)$ gives the correct answer with probability lower bounded by $1-\frac{1}{n^3}$. Taking the union bound over ${n \choose 2}$ pairs concludes the proof. 
\medskip
\end{enumerate}

The pseudo-code is shown as Algorithm~\ref{alg:2cc}.  The algorithm runs over each pair of nodes, and it invokes Algorithm~\ref{alg:edgeDisjointPaths} to construct  almost edge-disjoint paths for each pair of nodes $u,v$ using Breadth First Search.   Note that since we perform $20n\log{n}(2c)^{-L}$ queries uniformly at random, the resulting graph is  is asymptotically equivalent to $G \sim G(n,\frac{40\log{n}(2c)^{-L}}{n})$, see \cite[Chapter 1]{frieze2015introduction}. Here, $G(n,p)$ is the classic Erd\"{o}s-R\'{e}nyi model  (a.k.a random binomial graph model) where each possible edge between each pair  $(u,v) \in {[n] \choose 2}$ is included in the graph with probability $p$ independent from every other edge.  

It turns out that our algorithm needs an average degree  $O\left (\frac{\log n}{(2c)^L} \right )$ {\em only for the first level} of the trees $T_u,T_v$ that we grow from $u$ and $v$ when we invoke Algorithm~\ref{alg:edgeDisjointPaths}. For all other levels of the grown trees, we need the degree to be only $O(\log{n})$. This difference in the branching factors exists in order to ensure that the number of leaves of trees $T_u,T_v$ in  Algorithm~\ref{alg:edgeDisjointPaths} is amplified by a factor of $\frac{1}{(2c)^L}$, which then allows us to apply Theorem~\ref{thm:readk}.    Using appropriate data structures,  Algorithm~\ref{alg:2cc} runs in $O(n^2(n+m))=O(n^3\log{n}(2c)^{-L})$. One can improve the run time 
in expectation by sampling $O(\log{n})$ neighbors for each node 
in $O(\log n)$ time instead of $O(\log{n}(2c)^{-L})$ time using a standard sublinear sampling technique that generates geometric random variables between successive successes, see \cite{knuth2007seminumerical,tsourakakis2011triangle}. This results in total expected run time $O(n^3\log{n})$.
Since we use a branching factor of $O(\log n)$ for all except the first two levels of $T_u,T_v$,  we work with the $G(n,p)$ model with $p=\frac{40\log{n}}{n}$ to construct the set of almost edge disjoint paths. (Alternatively, one can think that we start with the larger random graph with more edges, and then in the construction of the almost edge disjoint paths we subsample a smaller collection of edges to use in this stage.)  The diameter of this graph \whp grows asymptotically as $L$ \cite{bollobas1998random} for this value of $p$. We use the  $G(n,\frac{40\log{n}(2c)^{-L}}{n})$ model only in Lemma \ref{lem1} to prove that every node has degree at least 
$5\log{n}(2c)^{-L}$.   

 The following result is well known but we present a proof for completeness.  
 
\begin{algorithm}[t]
\caption{\label{alg:2cc} 2-Correlation Clustering with Noise} 
 \begin{algorithmic} 
\STATE $L \leftarrow \diam$
\STATE Perform $ 20n\log{n} (2c)^{-L}$ queries uniformly at random.
\STATE Let $G(V,E,\f)$ be the resulting graph, $\f:E\rightarrow \{+1,-1\}$ 
 \FOR{each item pair $u,v$} 
 \STATE {$\mathcal{P}_{u,v}=\{P_1,\ldots,P_N\} \leftarrow$ Almost-Edge-Disjoint-Paths($u,v$)}  
 \STATE  $Y_i \leftarrow \prod_{e \in P_i} \f(e)$ for $i=1,\ldots,N$  
 \STATE $Y_{uv} \leftarrow \sum_{P \in \mathcal{P}_{u,v}} Y_P $
  \IF{$Y_{uv} \geq 0$} 
  \STATE {$\f(u) = \f(v)$} 
  \ELSE 
  \STATE{$\f(u)\neq \f(v)$} 
  \ENDIF
 \ENDFOR
\end{algorithmic}
\end{algorithm}

\begin{algorithm}[t]
\caption{\label{alg:edgeDisjointPaths}  Almost-Edge-Disjoint-Paths($u,v$)} 
 \begin{algorithmic}  
\REQUIRE $G(V,E,\f)$, $u,v \in V(G)$ 
\STATE $\epsilon \leftarrow \frac{1}{\sqrt{\log\log{n}}}$
\STATE Using Breadth First Search (BFS) grow a tree $T_u$ starting from $u$ as follows.  
\STATE For the first level of the tree, we choose $4\log{n}(2c)^{-\diam}$ neighbors of $u$. 
\STATE For the rest of the tree we use a branching factor equal to $4\log{n}$ until it reaches depth equal to $\epsilon L$. Similarly, grow a tree $T_v$ rooted at $v$, node disjoint from $T_u$ of equal depth.
\STATE From each leaf $u_i$ ($v_i$) of  $T_u$ ($T_v$)  for $i=1,\ldots, N$ 	grow  node disjoint trees until they reach depth $(\frac{1}{2}+\epsilon)L$ with branching factor $4\log{n}$.  Finally, find an edge between $T_{u_i},T_{v_i}$ 
\end{algorithmic}
\end{algorithm}

\begin{claim} 
\label{claim1} 
Consider a  path $P_{uv}$ between nodes $u,v$ of length $L$. Let $R_{uv}=\prod_{e \in P_{uv}} \f(e)$. Then, 
\begin{align}
\label{eq1} 
\Prob{R_{uv}=1|f(u)=f(v)} = \Prob{R_{uv}=-1|f(u)\neq f(v)}     =\frac{1+(1-2q)^L}{2}  &\nonumber
\end{align}
\end{claim}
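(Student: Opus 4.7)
The plan is to decouple the randomness of the edge queries from the ground-truth structure of the path. For each edge $e=(x,y)$ on $P_{uv}$, define its \emph{true} sign $s(e) \in \{+1,-1\}$ by $s(e)=+1$ if $f(x)=f(y)$ and $s(e)=-1$ otherwise. A key combinatorial observation is that $\prod_{e \in P_{uv}} s(e) = +1$ iff $f(u)=f(v)$: walking along the path, each occurrence of $s(e)=-1$ flips the cluster, and an even number of flips must occur exactly when the endpoints agree. Next, introduce i.i.d.\ Rademacher noise variables $\epsilon_e$, independent across edges, with $\Prob{\epsilon_e=-1}=q$ and $\Prob{\epsilon_e=+1}=1-q$, so that $\tilde{f}(e)=s(e)\epsilon_e$. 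Then $R_{uv}=\left(\prod_{e \in P_{uv}}s(e)\right)\cdot\left(\prod_{e \in P_{uv}}\epsilon_e\right)$, and so the conditional statements in the claim both reduce to computing $\Prob{\prod_{e \in P_{uv}}\epsilon_e = 1}$.

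I would then compute this probability by the ``expectation of the product'' trick, which is essentially the Piling-Up Lemma. By independence of the $\epsilon_e$'s and the fact that $\Mean{\epsilon_e}=(1-q)-q=1-2q$, we have
\begin{equation*}
\Mean{\prod_{e \in P_{uv}}\epsilon_e} = \prod_{e \in P_{uv}}\Mean{\epsilon_e} = (1-2q)^L.
\end{equation*}
Since $\prod_e \epsilon_e$ takes values in $\{+1,-1\}$, its expectation also equals $2\Prob{\prod_e \epsilon_e = 1} - 1$, from which
\begin{equation*}
\Prob{\prod_{e \in P_{uv}}\epsilon_e = 1} = \frac{1+(1-2q)^L}{2}.
\end{equation*}
Substituting back gives $\Prob{R_{uv}=1\mid f(u)=f(v)}=\frac{1+(1-2q)^L}{2}$ and, symmetrically, $\Prob{R_{uv}=-1\mid f(u)\neq f(v)}=\frac{1+(1-2q)^L}{2}$, completing the proof.

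There is no real obstacle here; the only subtle point is the initial observation that the \emph{true} sign product along the path recovers whether $f(u)=f(v)$, which justifies reducing the two conditional probabilities to the same parity calculation for the independent noise variables. Alternatively, one could do a straightforward induction on $L$ using the recursion $p_{L+1}=(1-q)p_L+q(1-p_L)$ with base case $p_1=1-q$, which yields the same closed form; I would mention this only if a more self-contained derivation were desired.
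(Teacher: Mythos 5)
Your proof is correct and reaches the same closed form the paper does, but it packages the computation differently. The paper argues directly that $R_{uv}=1$ exactly when the number of corrupted edges along $P_{uv}$ is even, models that count as $Z_{uv}\sim \text{Bin}(L,q)$, and then extracts $\Prob{Z_{uv}\text{ even}}$ by expanding $(1-2q)^L$ with the binomial theorem and pairing it with $\Prob{\text{even}}+\Prob{\text{odd}}=1$. You instead factor $\tilde{f}(e)=s(e)\epsilon_e$ into a deterministic true-sign part and independent Rademacher noise, observe that $\prod_e s(e)$ encodes whether $f(u)=f(v)$, and compute $\Mean{\prod_e \epsilon_e}=(1-2q)^L$ by independence before converting a $\{\pm 1\}$-valued expectation into a probability. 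These are two derivations of the same piling-up identity; yours avoids the explicit binomial sum and makes the symmetry between the two conditional probabilities transparent from the factorization, while the paper's version is more elementary in that it never leaves the language of event counts. Either derivation is fine; your decoupling step $\tilde{f}(e)=s(e)\epsilon_e$ and the observation that the true-sign product determines agreement of $f(u),f(v)$ correctly capture the reduction the paper makes implicitly when it says ``$R_{uv}=1$ iff $\tilde{f}$ agrees with $f$ on $u,v$.''
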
 
 
\begin{proof} 
Here $R_{uv} = 1$ iff $\f$ agrees with the unknown clustering function $f$ on $u,v$. This happens when the number of corrupted edges along that path $P_{uv}$ is even. Let $Z_{uv} \sim Bin(L,q)$
be the number of corrupted edges/sign flips along the $P_{uv}$ path.
Clearly, $\Prob{Z_{uv} \text{~is even}}+ \Prob{Z_{uv} \text{~is odd}}=1$. Also,

\begin{align*}
\hspace{-3mm} (1-2q)^L =  \sum_{k=0}^L {L \choose k} (-q)^k (1-q)^{L-k}  &= 
 \sum_{k=0}^{\lfloor \frac{L}{2} \rfloor } {L \choose 2k} q^{2k} (1-q)^{L-2k} - 
   \sum_{k=0}^{\lfloor \frac{L}{2} \rfloor }  {L \choose 2k+1} q^{2k+1} (1-q)^{L-(2k+1)} =&\\
& \Prob{Z_{uv} \text{~is even}} - \Prob{Z_{uv} \text{~is odd}}. 
\end{align*} 
\medskip

\noindent Therefore $\Prob{Z_{uv} \text{~is even}} = \frac{1+(1-2q)^L}{2}$, and the result follows. 
\end{proof}

\noindent  The next lemma is a direct corollary of the lower tail multiplicative Chernoff bound. 

\begin{lemma} 
\label{lem1} 
Let $G \sim G(n,\frac{40\log{n}}{(2c)^Ln})$ be a random binomial graph. Then \whp all vertices have degree greater than $5\log{n}(2c)^{-L}$. 
\end{lemma}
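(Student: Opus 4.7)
The plan is to apply a multiplicative Chernoff bound vertex by vertex and then close via a union bound, so this is essentially a warm-up computation rather than the technical heart of the paper.

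Fix any vertex $v \in V(G)$. In the $G(n,p)$ model with $p=\frac{40\log n}{(2c)^L n}$, the degree $d(v)$ is distributed as $\bin{n-1}{p}$, so its expectation satisfies
\[
\mu := \Mean{d(v)} = (n-1)p = (1-o(1))\cdot \frac{40\log n}{(2c)^L}.
\]
The target lower bound $5\log n\,(2c)^{-L}$ is (up to the $1-o(1)$ factor) exactly $\mu/8$, so I would set $\epsilon = 7/8 - o(1)$ and apply the lower-tail multiplicative Chernoff inequality from Theorem~\ref{chernoff} (specifically the bound $\Prob{X \leq \mu - a} \leq e^{-a^2/(2\mu)}$ with $a = (7/8 - o(1))\mu$). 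This gives
\[
\Prob{d(v) \leq 5\log n\,(2c)^{-L}} \;\leq\; \exp\!\left(-\frac{(7/8)^2\mu}{2}(1-o(1))\right) \;\leq\; \exp\!\left(-\frac{49}{128}\cdot \frac{40\log n}{(2c)^L}(1-o(1))\right).
\]

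Because $0 < c < 1/2$ forces $(2c)^L \leq 1$, we have $(2c)^{-L} \geq 1$, and the exponent is at most $-(49\cdot 40/128)(1-o(1)) \log n < -15\log n$ for all sufficiently large $n$. Hence $\Prob{d(v)\leq 5\log n\,(2c)^{-L}} \leq n^{-15}$ for every fixed $v$.

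Finally, a union bound over the $n$ choices of $v$ shows that the probability that \emph{some} vertex has degree at most $5\log n\,(2c)^{-L}$ is at most $n\cdot n^{-15} = n^{-14} = o(1)$, so whp every vertex has degree strictly greater than $5\log n\,(2c)^{-L}$. There is no real obstacle here: the only thing one has to be careful with is keeping the constant $40$ large enough relative to the target constant $5$ so that the Chernoff exponent comfortably dominates the $\log n$ coming from the union bound, which the chosen constants do with substantial room to spare.
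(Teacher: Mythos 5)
Your proof is correct and follows essentially the same route as the paper: a lower-tail multiplicative Chernoff bound on the $\bin{n-1}{p}$ degree of a fixed vertex, followed by a union bound over the $n$ vertices. The paper is slightly looser, simply taking $\delta = 3/4$ (which already gives $e^{-\Omega(\log n)} \ll n^{-1}$) rather than tuning the deviation to the exact factor of $8$ as you do, but the argument is the same.
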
 

\begin{proof} 
The degree $deg(u)$ of a node $u \in V(G)$ follows the binomial distribution $Bin(n-1,\frac{40\log{n}}{(2c)^Ln})$. Set $\delta = \frac{3}{4}$. Then 
\begin{align*}
\Prob{  deg(u) <5 \log{n} (2c)^{-L}  } &< e^{- \frac{\delta^2}{2} 40\log{n} (2c)^{-L}}  \ll n^{-1}.
\end{align*}
Taking a union bound over $n$ vertices gives the result.
\end{proof}

\noindent Now we proceed to our construction of sufficiently enough almost edge-disjoint paths. Our construction is based on  standard techniques in random graph theory \cite{broder1998optimal,dudek2015rainbow,frieze2012rainbow,tsourakakis2013mathematical}, we include the full proofs for completeness.

\begin{lemma}
\label{lem2} 
Let $G\sim G(n,p)$ where $p=\frac{40\log n}{n}$. 
Fix $t\in \field{Z}^+$ and $0<\alpha<1$. Then, \whp there does not exist a subset $S \subseteq [n]$, 
such that $|S| \leq \alpha t L$ and $e[S] \geq |S|+t$. 
\end{lemma}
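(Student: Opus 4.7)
\medskip
\noindent\textbf{Proof proposal.} The plan is a standard first-moment union bound over all ``bad'' vertex subsets, combined with the binomial tail estimate $\Prob{\bin{m}{p} \ge k} \le \binom{m}{k} p^k$. Let $p = 40\log n / n$ and let $A_s$ denote the event that some subset $S \subseteq [n]$ with $|S| = s$ has $e[S] \ge s+t$. Then
\[
\Prob{\exists S : |S| \le \alpha t L,\, e[S] \ge |S|+t} \;\le\; \sum_{s=1}^{\alpha t L} \binom{n}{s} \binom{\binom{s}{2}}{s+t} p^{s+t},
\]
and the goal is to show that this sum is $o(1)$.

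The next step is to simplify the summand using the elementary inequalities $\binom{n}{s} \le (en/s)^s$ and $\binom{\binom{s}{2}}{s+t} \le (es^2/(2(s+t)))^{s+t} \le (es/2)^{s+t}$. Substituting $p = 40\log n / n$ and combining the $(en/s)^s$ factor with $s$ of the $s+t$ copies of $(20es\log n / n)$, I expect the summand to collapse to the clean form
\[
\left(20 e^2 \log n\right)^s \cdot \left(\frac{20 e s \log n}{n}\right)^{t}.
\]
Bounding $s \le \alpha t L$ in both factors, the sum over $s$ becomes at most a geometric-type sum whose dominant term is proportional to $\left(20 e^2 \log n\right)^{\alpha t L} \cdot \left(\frac{20 e \alpha t L \log n}{n}\right)^{t}$.

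The heart of the argument is the observation that, with $L = \log n / \log\log n$,
\[
\left(20 e^2 \log n\right)^{\alpha t L} = \exp\!\Bigl(\alpha t L \bigl(\log\log n + O(1)\bigr)\Bigr) = n^{\alpha t (1+o(1))},
\]
so the whole bound reduces to $n^{\alpha t - t + o(t)} \cdot \polylog(n)^{t} = n^{-(1-\alpha)t + o(1)}$, which is $o(1)$ for any fixed $\alpha < 1$ and $t \ge 1$.

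The main obstacle I anticipate is precisely this last step: the competition between the factor $(20 e^2 \log n)^s$, which grows like $n^{\alpha t}$ at the upper limit $s = \alpha t L$, and the factor $n^{-t}$ coming from $p^{s+t}$. The choice of $\alpha < 1$ is exactly what makes this competition come out in our favor, and the calibration $L = \log n / \log\log n$ is exactly what makes $(\log n)^{L}$ polynomial in $n$ (rather than super-polynomial), so verifying the $n^{\alpha t(1+o(1))}$ estimate carefully is the one piece where care is needed. Once that is in hand, the result follows directly.
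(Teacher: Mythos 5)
Your proposal is correct and follows essentially the same route as the paper's proof: the same union bound over subsets of size $s \le \alpha t L$, the same use of $\binom{n}{s}\binom{\binom{s}{2}}{s+t}p^{s+t}$, the same recombination to isolate $(20e^2\log n)^s$ and $(20es\log n/n)^t$, and the same calibration that $(\log n)^{\alpha tL} = n^{\alpha t(1+o(1))}$ because $L = \log n/\log\log n$. The only cosmetic difference is that the paper writes the constant $20$ as $e^{o(1)}$ inside the power of $\log n$ and uses the crude bound (number of terms) $\times$ (largest term) rather than your geometric-series phrasing, but both yield the final $n^{-(1-\alpha-o(1))t} = o(1)$.
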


\begin{proof}
Set $s=|S|$.Then,
\begin{align*}
\Prob{ \exists S: s \leq \alpha t L \text{~~and~~} e[S] \geq s+t } 
&  
\leq \sum_{s \leq \alpha t L} \binom{n}{s} \binom{\binom{s}{2}}{s+t} p^{s+t} \leq  \sum_{s \leq \alpha t L } \bfrac{ne}{s}^s \bfrac{es^2p}{2(s+t)}^{s+t}  \\  
\leq \sum_{s \leq \alpha t L } (e^{2+o(1)}\log{n})^s \bfrac{20es\log{n}}{n}^t  
&\leq  \alpha tL \brac{ (e^{2+o(1)}\log{n})^{ \alpha L} \bfrac{20e \alpha t\log^2{n} }{n\log{\log{n}}}}^t <   \frac{1}{n^{(1-\a-o(1))t}}. 
\end{align*}
\end{proof}

\begin{lemma}
\label{lem3}
Let $T$ be a rooted tree of depth at most $\frac{4L}{7}$ and let $v$ be a vertex not in $T$.  Then with probability $1-o(n^{-3})$, $v$ has at most $10$ neighbors in $T$, i.e., $|N(v) \cap T| \leq 10$.
\end{lemma}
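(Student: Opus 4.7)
The plan is to bound $|T|$ via the branching factor and depth of the BFS construction, and then apply a binomial-type tail bound to the number of edges from $v$ into $T$.

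First, I would bound $|T|$. Since the BFS in Algorithm~\ref{alg:edgeDisjointPaths} uses branching factor at most $4\log n$ at every level (the larger first-level factor $4\log n\,(2c)^{-L}$ only multiplies the size by $n^{o(1)}$ since $(2c)^{-L}=n^{o(1)}$ for constant $c$), a tree of depth at most $\tfrac{4L}{7}$ has $|T|\le 2(4\log n)^{4L/7}\cdot n^{o(1)}$. Substituting $L=\log n/\log\log n$ and taking logarithms gives
$$\log|T|\le \tfrac{4L}{7}\log(4\log n)+o(\log n)=\tfrac{4}{7}\log n\,(1+o(1)),$$
so $|T|\le n^{4/7+o(1)}$.

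Next, I would argue that conditional on $T$, the indicators $\One{\{v,w\}\in E}$ for $w\in T$ are i.i.d.\ $\ber{p}$ with $p=40\log n/n$. The BFS is rooted at some $u\ne v$ and never needs to add $v$ to the tree, so by the principle of deferred decisions one may construct $T$ using only the edges of $G[V\setminus\{v\}]$ and reveal the edges incident to $v$ afterwards (equivalently, order the enumeration of candidate children of every tree node so that $v$ is considered last). Given this independence, $|N(v)\cap T|$ is stochastically dominated by $\bin{|T|}{p}$, and a union bound over $11$-subsets yields
$$\Prob{|N(v)\cap T|\ge 11}\;\le\;\binom{|T|}{11}p^{11}\;\le\;(|T|\,p)^{11}.$$
Using $|T|\,p\le n^{4/7+o(1)}\cdot \tfrac{40\log n}{n}=n^{-3/7+o(1)}$, the right-hand side is $n^{-33/7+o(1)}=o(n^{-3})$ since $33/7>3$.

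The main obstacle is justifying the independence between $T$ and the edges incident to $v$; once that is in hand the rest is a routine tail calculation, and the comfortable margin ($n^{-33/7}$ versus the target $n^{-3}$) means the constant $10$ in the statement is not tight and no sharper inequality is needed. If one dislikes the deferred-decisions argument, a more explicit alternative is to first describe BFS as a procedure that only queries edges between already-discovered vertices and undiscovered non-$v$ vertices; since each tree node has $\Theta(\log n)$ available children by Lemma~\ref{lem1}, the branching factor is met without ever needing to test $\{w,v\}$, so $T$ is a deterministic function of edges not incident to $v$.
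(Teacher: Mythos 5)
Your proof takes a genuinely different route from the paper's. The paper's proof of Lemma~\ref{lem3} does not bound $|T|$ or appeal to any independence between $T$ and the edges at $v$: it invokes Lemma~\ref{lem2} (no small subset $S$ has $e[S]\geq |S|+t$). If $v$ had $b\geq 11$ neighbors in $T$, the set $S$ consisting of $v$, those neighbors, and their ancestors in $T$ has $|S|=O(bL)$ vertices and $e(S)\geq |S|+b-2$ edges, which Lemma~\ref{lem2} rules out \whp. Since that is a single structural event about $G$ that holds with probability $1-o(n^{-3})$ regardless of how $T$ arises, the conclusion holds \emph{simultaneously} for every rooted tree of depth $\leq 4L/7$ sitting inside $G$ and every $v$, with no conditioning argument needed. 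This is exactly why the lemma is stated with only a depth hypothesis and no size hypothesis on $T$.

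Your argument instead bounds $|T|$ via the BFS branching factor and then treats $|N(v)\cap T|$ as stochastically dominated by $\bin{|T|}{p}$. Two things separate this from a proof of the stated lemma. First, you are proving a strictly weaker statement: the lemma only assumes a depth bound on $T$, but you additionally assume $T$ was grown with branching factor $O(\log n\,(2c)^{-L})$. Your proof would not apply, as written, to an arbitrary rooted subtree of $G$ of the stated depth. (One could patch this by invoking the \whp{} $O(\log n)$ max-degree bound for $G(n,40\log n/n)$, but that is another event to union over and it is not mentioned.) Second, and more substantively, the deferred-decisions step is the load-bearing part and remains informal. The worry is not merely that $T$ was revealed before $v$'s edges; it is that $T$ is reconstructed repeatedly inside the proof of Lemma~\ref{lem4}, and the trees built there genuinely do inspect edges incident to $v$ when $v$ is the node currently being expanded, so ``$T$ is a deterministic function of edges not incident to $v$'' describes a modified BFS rather than the one Algorithm~\ref{alg:edgeDisjointPaths} runs. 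Closing this requires a coupling between the modified and unmodified BFS, or an explicit restatement of the BFS so that it never reads edges into $v$, together with an argument that the trees it produces still satisfy the needs of Lemma~\ref{lem4}. The paper's route via Lemma~\ref{lem2} avoids all of this, which is precisely what it buys in exchange for proving the extra structural lemma. Your calculation itself (tree size $n^{4/7+o(1)}$, hence $(|T|p)^{11}=n^{-33/7+o(1)}=o(n^{-3})$) is fine and gives comfortable slack; the gaps are the unstated size hypothesis and the unfinished independence argument.
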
 

\begin{proof}
Let $T$ be a rooted tree of depth at most $\frac{4L}{7}$ and let $S$ consist of $v$, the neighbors of $v$ in $T$ plus the ancestors of these neighbors. Set $b =|N(v) \cap T|$. Then 
$|S|\leq 4bL/7+1\le 3bL/5$ and $e(S)=|S|+b-2$. It follows from Lemma \ref{lem2} with $\a=3/5$ and $t=8$, that 
we must have $b\leq 10$ with probability $1-o(n^{-3})$.
\end{proof} 

\noindent We show that by growing trees iteratively we can construct 
sufficiently many edge-disjoint paths for $n$ sufficiently large.

\begin{lemma}
\label{lem4} 
Let $\epsilon = \frac{1}{\sqrt{ \log \log n}}$, and $k=\epsilon L$.
For all pairs of  vertices $x,y \in [n]$ there exists a subgraph $G_{x,y}(V_{x,y},E_{x,y})$ of $G$ as shown in figure~\ref{fig:fig1}, {\it whp}.
The subgraph consists of two isomorphic vertex disjoint trees $T_x,T_y$ rooted at $x,y$ each of depth $k$.
$T_x$ and $T_y$ both have a branching factor of  $4\log n(2c)^{-L}$ for the first level, and  $4\log n$ for the remaining levels.
If the leaves of $T_x$ are $x_1,x_2,\ldots,x_\tau,\tau\geq (2c)^{-L}n^{4\epsilon/5}$ then $y_i=f(x_i)$ where $f$ is a natural 
isomorphism.
Between each pair of leaves $(x_i,y_i),i=1,2,\ldots,m$ 
there is a path $P_i$ of length  $(1+2\epsilon) L$. The paths $P_i,i=1,2,\ldots,\tau,\ldots$ are edge disjoint.
\end{lemma}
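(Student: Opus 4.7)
I would construct $G_{x,y}$ by BFS in three stages, using Lemma~\ref{lem1} for the root fan-out, Lemma~\ref{lem3} to control in-tree neighbors throughout the growth, and a direct union bound for the bridging edges. All the BFS runs inside $G\sim G(n,40\log n/n)$; the higher edge density with the extra factor $(2c)^{-L}$ is invoked only for the first BFS level of the roots, via Lemma~\ref{lem1}.

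In the first stage, starting from $x$, I would select $4\log n(2c)^{-L}$ distinct neighbors---possible because Lemma~\ref{lem1} gives minimum degree at least $5\log n(2c)^{-L}$. For each subsequent BFS vertex $u$ at depth $<k=\epsilon L$, a standard Chernoff bound gives $\deg_G(u)\geq 30\log n$ \whp. Since the partial tree has depth $\epsilon L < 4L/7$, Lemma~\ref{lem3} (applied to $u$ and the partial tree with $u$ removed) implies that at most $10$ of $u$'s neighbors already lie in the tree, leaving $\geq 4\log n$ fresh children. I would grow $T_y$ from $y$ identically while excluding $V(T_x)$, which imposes no difficulty since $|V(T_x)|=o(n)$. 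A direct product yields
\[
\tau = 4\log n\,(2c)^{-L}(4\log n)^{k-1} = (2c)^{-L}(4\log n)^{\epsilon L} = (2c)^{-L}\exp\!\bigl(\epsilon\log n\,(1+o(1))\bigr) \geq (2c)^{-L} n^{4\epsilon/5}
\]
for $n$ large, using $L=\log n/\log\log n$ so that $\log(4\log n)/\log\log n\to 1$.

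In the second stage, from each leaf $x_i$ (resp.\ $y_i$) I would grow an extension tree $T_{x_i}$ (resp.\ $T_{y_i}$) of depth $(1/2+\epsilon)L$ and branching factor $4\log n$, vertex-disjoint from every previously grown subtree. The cumulative depth $(1/2+2\epsilon)L$ is still below $4L/7$, so the same Chernoff/Lemma~\ref{lem3} argument keeps supplying $4\log n$ fresh children at every step, and the total size $\tau\cdot(4\log n)^{(1/2+\epsilon)L}=n^{1/2+O(\epsilon)+o(1)}=o(n)$ fits comfortably within $[n]$. In the third stage, for each $i$ I would exhibit a bridging edge between some leaf of $T_{x_i}$ and some leaf of $T_{y_i}$: each side has $\geq n^{1/2+\Omega(\epsilon)}$ leaves, giving $\geq n^{1+\Omega(\epsilon)}$ candidate cross-pairs, each an edge of $G$ independently with probability $40\log n/n$; the probability that no bridge exists is therefore $\exp(-\Omega(n^{\Omega(\epsilon)}\log n))$. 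Union-bounding over the $\tau\leq n$ indices $i$ and the $\binom{n}{2}$ root pairs keeps the total failure probability $o(1)$. Concatenating $x_i\to(\text{leaf of }T_{x_i})\to(\text{leaf of }T_{y_i})\to y_i$ yields a path of length $(1+2\epsilon)L+O(1)$, and edge-disjointness follows from the vertex-disjointness of the interior subtrees plus distinctness of the bridge edges.

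The main obstacle is preserving independence across these rounds of BFS: because each round exposes certain edges of $G(n,p)$, the Chernoff estimates and bridging-edge calculation must be applied via the principle of deferred decisions, so that the edges used by a newly grown subtree or the final bridge remain unexamined at the moment they are needed, and hence i.i.d.\ Bernoulli$(40\log n/n)$. A related technical nuisance is coupling Lemma~\ref{lem3} across the many superposed subtrees; this is handled by applying it to each subtree separately and using vertex-disjointness to rule out double counting.
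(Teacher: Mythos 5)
Your construction mirrors the paper's proof almost exactly: grow $T_x,T_y$ to depth $\epsilon L$ using Lemma~\ref{lem1} for the dense first level and Lemma~\ref{lem3} to control in-tree back-edges, then grow extension trees to depth $(\tfrac12+\epsilon)L$ from each leaf, then argue a bridging edge exists with overwhelming probability, handling the conditioning by noting the bridge edges are unexposed during the BFS (the paper makes this precise via the event $\mathbf A$ and the bound $\Prob{\cdot\mid \mathbf A}\leq \Prob{\cdot}/\Prob{\mathbf A}$, which is the rigorous form of your ``deferred decisions'' remark). The counting $\tau\geq(2c)^{-L}n^{4\epsilon/5}$, the $n^{1/2+\Omega(\epsilon)}$ leaves of each extension tree, and the doubly-exponential failure probability for the bridge all agree with the paper.

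One place where your wording should be tightened: you say the Lemma~\ref{lem3} coupling is ``handled by applying it to each subtree separately and using vertex-disjointness to rule out double counting.'' Read literally, applying Lemma~\ref{lem3} to each of the $\tau$ extension subtrees $\hat T_{x_i}$ separately would yield up to $10\tau$ bad edges at an explored vertex, which is far too many to leave $4\log n$ fresh children. The correct observation---which you implicitly rely on when you note that the cumulative depth $(\tfrac12+2\epsilon)L$ stays below $4L/7$---is that $T_x$ together with all its extension subtrees grown so far forms a \emph{single} rooted tree of depth $(\tfrac12+2\epsilon)L<4L/7$, so Lemma~\ref{lem3} is applied once to this union (and once to the analogous $y$-side union), giving at most $20$ bad edges in total per explored vertex, exactly as in the paper's step ``at most 10 to $T_x$ plus any trees rooted at an $x_i$ and another 10 for $y$.''
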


\begin{proof}
Because we have to do this for all pairs $x,y$, we note without further comment that likely (resp. unlikely) 
events will
be shown to occur with probability $1-o(n^{-2})$ (resp. $o(n^{-2}$)).

To find the subgraph shown in Figure~\ref{fig:fig1}(b) we grow tree structures as shown 
in Figure~\ref{fig:fig1}(a). 
Specifically, we first grow a tree from $x$ using BFS until it reaches  depth $k$. 
Then, we grow a tree starting from $y$ again using BFS until it reaches depth $k$. For the first level of both trees, we choose $\frac{5\log n}{(2c)^L}$ neighbors of $x,y$ respectively. For all other levels we use a branching factor equal to $4\log n$. Before we show how to continue our construction,  we need to prune down the degree of $G([n]\backslash\{x,y\})$ so that the remaining subgraph behave as $G(n,p)$ with $p=\Theta(\frac{\log n}{n})$. This can be achieved  for example  either by considering a random subgraph of $G$ according to $G([n]\backslash\{x,y\}, \frac{40\log n}{n})$, applying Chernoff bounds as in Lemma~\ref{lem1} to show that each node has degree at least $5\log n$, or by letting each node choose $5 \log{n}$ neighbors uniformly at random.  

Finally, once trees $T_x,T_y$ have been constructed, we grow trees from the leaves of $T_x$ and $T_y$ using BFS for depth $\gamma=(\frac{1}{2}+\epsilon)L$. 
Now we analyze these processes. Since the argument is the same we explain
it in detail for $T_x$ and we outline the differences for the other trees. 
We use the notation $D_i^{(\r)}$ for the number of vertices at depth $i$
of the BFS tree rooted at $\r$.

First we grow $T_x$. As we grow the tree via BFS from a vertex $v$ at depth $i$ to vertices 
at depth $i+1$ certain {\em bad} edges from $v$ may point 
to vertices already in $T_x$. Lemma \ref{lem3} shows with probability $1-o(n^{-3})$ there can be at most
10 bad edges emanating from $v$.

Hence, we obtain the recursion  
\beq{rec1}
D_{i+1}^{(x)} \geq \brac{5\log{n}-10} (D_i^{(x)}-1) \geq 4 \log{n}D_i^{(x)}.
\eeq

\noindent Therefore the number of leaves satisfies 

\beq{rec2}
D_{k}^{(x)} \geq  \frac{1}{(2c)^L} \Big(4\log n\Big)^{\e L}\geq \frac{1}{(2c)^L}n^{4\e/5}.
\eeq
We can make the branching factors exactly $4\log n(2c)^{-L}$ for the first level and $4\log n$ for all remaining levels  by pruning.
We do this so that the trees $T_x$ are isomorphic to each other. With a similar argument 
\beq{rec3}
D_{k}^{(y)} \geq \frac{1}{(2c)^L} n^{\frac{4}{5}\epsilon}.
\eeq
The only difference is that now we also say an edge is bad if the other endpoint is in $T_x$.
This immediately gives
$$D_{i+1}^{(y)} \geq \brac{5\log{n}-20} (D_i^{(y)}-1) \geq 4\log{n}D_i^{(y)}$$
and the required conclusion \eqref{rec3}.

Similarly, from each leaf $x_i \in T_x$ and $y_i \in T_y$ we grow trees $\hT_{x_i},\hT_{y_i}$ of depth 
$\gamma = \big(\frac{1}{2}+\epsilon\big) L$ using the same procedure and arguments 
as above. Lemma~\ref{lem3} implies that there are at most 20 edges from the vertex $v$ being explored to 
vertices in any of the trees already constructed (at most 10 to $T_x$ plus any trees rooted at an $x_i$ 
and another 10 for $y$).
The number of leaves of each $\hT_{x_i}$ now satisfies
$$\hD_\g^{(x_i)}\geq (4\log{n})^{\g+1}\geq n^{\frac{1}{2}+\frac{4}{5}\epsilon}.$$
The result is similar for $\hD_\g^{(y_i)}$.

\noindent Observe next that BFS does not condition on the edges between the leaves $X_i,Y_i$ of the trees $\hT_{x_i}$ 
and $\hT_{y_i}$.
That is, we do not need to look at these edges in order to carry out our construction. 
On the other hand we have conditioned on the occurrence of certain events to imply a certain growth rate.
We handle this technicality as follows. We go through the above construction and halt if ever we find that we cannot
expand by the required amount. Let ${\bf A}$ be the event that we do not halt the construction
i.e. we fail the conditions of Lemmas \ref{lem2} or \ref{lem3}. We have
$\Prob{{\bf A}}=1-o(1)$ and so,

\begin{align*}
\Prob{\exists i:e(X_i,Y_i)=0\mid {\bf A}} &\leq \frac{\Prob{\exists i:e(X_i,Y_i)=0}}{\Pr({\bf A})}  
\leq 2n^{\frac{4\e}{5}}(1-p)^{n^{1+\frac{8\e}{5}}} \leq n^{-n^\e}.
\end{align*}

We conclude that {\em whp} there is always an edge between each $X_i,Y_i$ and thus a path of length at most
$(1+2\e)L$ between each $x_i,y_i$.
\end{proof}
 
\begin{figure}[t]
\centering
\begin{tabular}{@{}c@{}@{\ }c@{}} \includegraphics[width=0.45\textwidth]{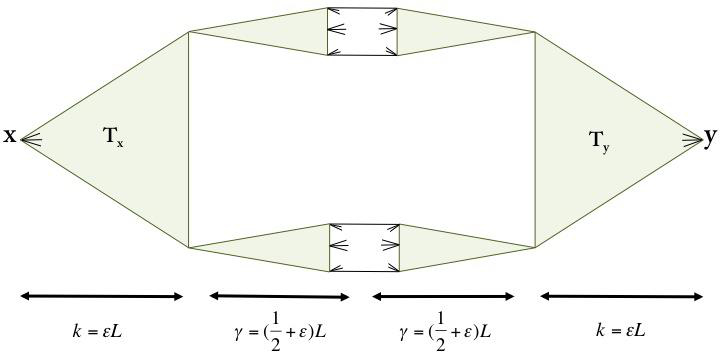}  \hspace{15mm} & \includegraphics[width=0.45\textwidth]{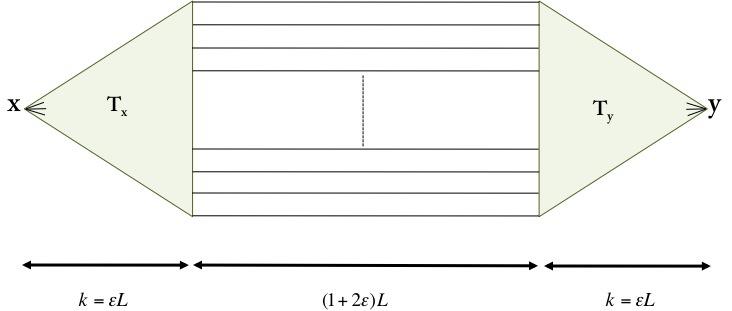} \\
(a)  \hspace{10mm}  & (b)  
\end{tabular}
\caption{\label{fig:fig1} Illustration of construction in Lemma~\ref{lem4}. (a) By repeatedly growing trees  appropriately, (b) we create for each pair of nodes $x,y$ two node disjoint trees $T_x,T_y$ of depth $k= \epsilon L$ whose leaves can be matched via a natural isomorphism and linked with edge disjoint paths of length $(1+o(1))L$}
\vspace{-4mm}
\end{figure}

\noindent The proof of Theorem~\ref{thm:thrm1} follows. Set  $q=\frac{1}{2}-c$.

\begin{proof}[Proof of Theorem~\ref{thm:thrm1}]
Fix a pair of nodes $x,y \in V(G)$.  Let $Y_1,\ldots,Y_N$  be the signs of the $N$ edge  disjoint paths connecting them, i.e., $Y_i \in \{-1,+1\}$ for all $i$. Also let $Y= \sum_{i=1}^N Y_i$. Notice that $\{Y_1,\ldots,Y_N\}$ is a read-$k$ family where $k=\frac{N}{4\log{n}(2c)^{-L}}$.

   By the linearity of expectation 
  
  $$\Mean{Y} = N(2c)^L\geq n^{\tfrac{4}{5}\epsilon} (2c)^{L}.$$

\noindent By applying Theorem~\ref{thm:readk} we obtain  

\begin{align*}
\Prob{Y<0} &= \Prob{Y-\Mean{Y} < -\Mean{Y} } \leq \exp \left( -\frac{n^{4/5\epsilon} (2c)^L}{\frac{2n^{4/5\epsilon}  }{4(2c)^{-L}\log{n}}} \right)  =o(n^{-2}).
\end{align*}
\end{proof}

\spara{Planted bisection model.}  Before we prove Theorem~\ref{thm:thrm2} we discuss the connection between our formulation and the well studied planted bisection model. Suppose that $n$ is even, and the graph has two clusters of equal size. 
The probabilities of connecting are $p$ within each cluster, and $q<p$ across the clusters. Now recall  our setting as described in Model \rom{1}, and consider just the edges that correspond to queries that return $+1$. These form a graph drawn from the planted bisection model where $p=\frac{1+c}{2}\times \frac{40 \log n}{n}, q=\frac{1-c}{2}\times \frac{40\log n}{n}$. Therefore, one can apply existing methods for exact recovery, e.g., 
\cite{abbe2016exact,mcsherry2001spectral} instead of our method when the sizes of the two clusters are (roughly) equal. It is worth noting that despite the wide variety of techniques that appear in the context of the planted partition model, including the EM algorithm \cite{snijders1997estimation}, spectral methods \cite{mcsherry2001spectral,vu2014simple}, semidefinite programming \cite{abbe2016exact,hajek2016achieving,montanari2015semidefinite}, hill-climbing \cite{carson2001hill}, Metropolis algorithm \cite{jerrum1998metropolis}, modularity based methods \cite{bickel2009nonparametric}, our  edge-disjoint path technique is novel in this context. 

Hajek, Wu, and Xu proved that when each cluster has $\Theta(n)$ nodes, the average degree has to scale as $\frac{\log n}{ (\sqrt{1-q} - \sqrt{q})^2 }$ for exact recovery \cite{hajek2016achieving}. Also, they showed that using semidefinite programming (SDP) exact recovery is achievable at this threshold \cite{hajek2016achieving}. 
Note that as $q \rightarrow \frac{1}{2}$,  this lower bound scales as $\Theta(\frac{\log n}{ (1-2q)^2 }) = \Theta( \frac{\log n}{(2c)^2})$.  It is an interesting theoretical problem to explore if the techniques we develop in this work, or similar techniques can get closer to this lower bound.


\begin{proof}[Proof of Theorem~\ref{thm:thrm2}]
Since the proof of Theorem~\ref{thm:thrm2} overlaps with the proof of Theorem~\ref{thm:thrm1}, we outline the main differences. Let us return to the basic version of Model \rom{2}, and let $X(e) \in \{-1,0,1\}$
for $e=(x,y)$ be 
$$\tilde{f}(e) - (g(x) - g(y)) \bmod k.$$
\noindent Then given a path between two vertices $u$ and $v$, 
$$g(v) = g(u) + \sum_{e \in P_{uv}} \tilde{f}(e) - \sum_{e \in P_{uv}} X(e) \bmod k.$$
Our question is now what is $Z_{uv} = \sum_{e \in P_{uv}} X(e) \bmod k$.  
We would like that
$Z_{uv}$ be (even slightly) more highly concentrated on 0 than on other
values, so that when $g(u) = g(v)$, we find that the sum of the return
values from our algorithm, 
$\sum_{e \in P_{uv}} \tilde{f}(e) \bmod k$, is most likely to be 0.
We could then conclude by looking over many almost edge-disjoint paths that if this sum is 0
over a plurality of the paths, then $u$ and $v$ are in the same group \whp. 

For our simple error model, the sum $\sum_{e \in P_{uv}} X(e) \bmod k$
behaves like a simple lazy random walk on the cycle of values modulo
$k$, where the probability of remaining in the same state at each step
is $q$.  Let us consider this Markov chain on the values modulo $k$;
we refer to the values as states.  
Let $p_{ij}^{t}$ be the probability of going from state $i$
to state $j$ after $t$ steps in such a walk.  It is well known than
one can derive explicit formulae for $p_{ij}^t$;  see e.g.  \cite[Chapter XVI.2]{feller1968introduction}.  It also follows by simply finding the eigenvalues and
eigenvectors of the matrix corresponding to the Markov chain and using 
that representation.  One can check the resulting forms to determine that 
$p_{0j}^{t}$ is maximized when $j=0$, and to determine the corresponding gap
$\max_{j\in[1,k-1]} |p_{00}^{t}-p_{0j}|^{t}$.  Based on this gap, we can apply
Chernoff-type bounds as in Theorem~\ref{thm:readk} to show that the plurality of almost edge-disjoint
paths will have error 0, allowing us to determine whether the endpoints of the path
$x$ and $y$ are in the same group with high probability.  

The simplest example is with $k=3$ groups, where we find
$$p_{00}^{t} = \frac{1}{3} + \frac{2}{3}\left (1-3q/2 \right )^t,$$
and 
$$p_{01}^{t} = p_{02}^{t} = \frac{1}{3} - \frac{1}{3}\left (1-3q/2 \right )^t.$$
In our case $t = L$, and we see that for any $q < 2/3$, $p_{00}^{t}$ is large enough
that we can detect paths using the same argument as in Model \rom{1}.

For general $k$, we use that the eigenvalues of the matrix
\[
\begin{bmatrix}
    1-q & q & 0 & \dots & q    \\
    q   & 1-q & q & \dots & 0    \\
    \vdots & \vdots & \vdots & \vdots & \ddots \\
    q & 0 & 0 & \dots  & 1-q  
\end{bmatrix}
\]
are 
$1-q+q \cos(2\pi j/k), j = 0, \ldots, k-1$, with the $j$-th corresponding 
eigenvector being $[1, \omega^j, \omega^{2j}, \ldots, \omega^{j(k-1)}]$ where
$\omega = e^{2\pi i/k}$ is a primitive $k$th root of unity.  Here, $i$ is not an index but the square root of -1, i.e., $i=\sqrt{-1}$.
In this case we have
$$p_{00}^{t} = \frac{1}{k} + \frac{1}{k}\sum_{j=1}^{k-1} \big(1-q +q \cos(2\pi j/k)\big)^t.$$
Note that $p_{00}^{t} > 1/k$.  
Some algebra reveals that the next largest value of $p_{0j}^{t}$ belongs to $p_{01}^{t}$, 
and equals
$$p_{01}^{t} = \frac{1}{k} + \frac{1}{k}\sum_{j=1}^{k-1} \omega^{-j} \big(1-q +q \cos(2\pi j/k)\big)^t.$$
We therefore see that the error between ends of a path again have the plurality value 0, with a gap of at least 
$$p_{00}^{t} - p_{01}^{t} \geq 2(1-\cos(2\pi/k))(1-q +q \cos(2\pi/k))^t.$$
This gap is constant for any constant $k \geq 3$ and $q \leq 1/2$.
\end{proof}

\noindent As we have already mentioned, the same approach could be used for the more general setting where
$$\tilde{f}(e) = g(x)-g(y)+j \text{~~with probability~~} q_j, 0 \leq j< k,$$
but now one works with the Markov chain matrix 
\[
\begin{bmatrix}
    q_0 & q_1 & q_2 & \dots & q_{k-1} \\
    q_{k-1}   & q_0 & q_1 & \dots & q_{k-2} \\
    \vdots & \vdots & \vdots &\ddots & \vdots \\
    q_1 & q_2 & q_3 & \dots & q_0
\end{bmatrix}.
\]

 \section{Related Work}
\label{sec:related}
Fritz Heider introduced the notion of a signed graph, with $+1$ or $-1$ labels on the edges, 
 in the context of balance theory \cite{heider1946attitudes}.  The key subgraph in balance theory is the {\em triangle}: any set of three fully interconnected nodes whose product of edge signs is negative is not  balanced. The  complete graph is balanced if every one of its triangles
is balanced.  Early work on signed graphs focused on graph theoretic properties of balanced graphs  \cite{cartwright1956structural}.  Harary proved the famous balance theorem which characterizes balanced graphs \cite{harary1953notion}. 

Bansal et al. \cite{bansal2004correlation} studied  \cc: given an undirected signed graph  partition the nodes into clusters so that the total number of disagreements is minimized. This problem is NP-hard \cite{bansal2004correlation,shamir2004cluster}. Here, a disagreement can be either a positive edge between vertices in two clusters or a negative edge between two vertices in the same cluster. Note that in \cc the number of clusters is not specified as part of the input. The case when the number of clusters is constrained to be at most two is known as \cccc.  

Minimizing disagreements is equivalent to maximizing the number of agreements. However, from an approximation perspective these two versions are different: minimizing is harder.  For minimizing disagreements, Bansal et al. \cite{bansal2004correlation} provide a 3-approximation algorithm for \cccc, and Giotis and Guruswami provide a polynomial time approximation scheme (PTAS) \cite{giotis2006correlation}. Ailon et al. designed a 2.5-approximation algorithm \cite{ailon2008aggregating} that was further improved by  Coleman et al. to a 2-approximation \cite{coleman2008local}. 
We remark that the notion of {\em imbalance} studied by Harary is  the \cccc cost of the signed graph.  Mathieu and Schudy initiated the study of noisy correlation clustering \cite{mathieu2010correlation}. They develop various algorithms when the graph is complete,  both for the cases of a random and a semi-random model. Later, 
Makarychev,   Makarychev,  and Vijayaraghavan proposed an algorithm for graps with $O(n\text{poly}\log n)$ edges under a semi-random model \cite{makarychev2015correlation}. 

When the graph is not complete \cc and {\sc Minimum Multicut} reduce to one another leading to a $O(\log{n})$ approximations \cite{charikar2003clustering,demaine2003correlation}. The case of constrained size clusters has recently been studied by Puleo and Mileknovic \cite{puleo2015correlation}.   Finally, by using the Goemans-Williamson SDP relaxation for {\sc Max Cut} \cite{goemans1995improved}, one can obtain a 0.878 approximation guarantee for \cccc problem as noted by \cite{coleman2008local}. 
 
Chen et al. \cite{chen2014clustering,chen2012clustering}  consider also model \rom{1} as described in Section~\ref{sec:introduction} and provide a method that can reconstruct the clustering for random binomial graphs with $O(n \text{poly} \log n$ edges. Their method exploits low rank properties of the cluster matrix, and requires certain  conditions, including conditions on the imbalance between clusters, see \cite[Theorem 1, Table 1]{chen2012clustering} to be true.  Their methods is based on a convex relaxation of a low rank problem. 
Also, closely related to our work lies the work of  Cesa-Bianchi et al. \cite{cesa2012correlation} that take a learning-theoretic perspective on the problem of predicting signs. They consider three types of models: batch, online, and active learning, and provide theoretical bounds for prediction mistakes for each setting. They use the correlation clustering objective as their learning bias, and they show that the risk of the empirical risk minimizer is controlled by the correlation clustering objective.   Chian et al. point out that the work of Cand{\`e}s and Tao \cite{candes2006robust} can be used to predict signs of edges, and also provide various other methods, including singular value decomposition based methods, for the sign prediction problem \cite{chiang2014prediction}. The incoherence is the key parameter that determines the number of queries, and is equal to the group imbalance $\tau = \max\limits_{\text{cluster~} C} \frac{n}{|C|}$. The number of queries needed for exact recovery under Model \rom{1} is $O(\tau^4 n \log^2{n})$.   


\section{Conclusion}
\label{sec:concl}
In this work we have studied the edge sign prediction problem, showing that under our proposed correlation clustering formulation and a fully random noise model querying $O(n^{1+o(1)}\log{n})$ pairs of nodes uniformly at random suffices to recover the clusters efficiently, \whp.  
We also provided a generalization of our model and proof approach to more than two clusters.  
While our work here is theoretical, in future work we plan to apply our method and additional heuristics to real data, and compare against related alternatives.   From a theoretical perspective, it is an interesting  problem to close the gap between our upper bound 
and the known  $\frac{\log n}{(2c)^2}$  lower bound for exact recovery \cite{hajek2016achieving} using techniques based on BFS.

\section*{Acknowledgment}

We would like to thank Bruce Hajek  and  Zeyu Zhou for detecting an error in an earlier version of our work. We also want to thank Yury Makarychev, Konstantin Makarychev, and Aravindan Vijayaraghavan for their feedback.

This work was supported in part by NSF grants CNS-1228598, CCF-1320231, CCF-1563710, and CCF-1535795.


    \printbibliography

\end{document}